\newcommand{\R}{\mathbb{R}}
\newcommand{\D}{\mathrm{d}}
\newcommand{\Q}{\mathfrak{Q}}
\newcommand{\erf}{\mathrm{erf}}
\newcommand{\anti}{antiderivative }
\title{Runtime Distributions and Criteria for Restarts}
\author{Jan-Hendrik Lorenz}
\institute{Institut für Theoretische Informatik, Universität Ulm, 89069 Ulm, Germany \\
\email{jan-hendrik.lorenz@uni-ulm.de}}
\begin{document}
\maketitle 


\begin{abstract}
Randomized algorithms sometimes employ a restart strategy.
After a certain number of steps, the current computation is aborted and restarted with a new, independent random seed. 
In some cases, this results in an improved overall expected runtime. 
This work introduces properties of the underlying runtime distribution which determine whether restarts are advantageous.
The most commonly used probability distributions admit the use of a scale and a location parameter. Location parameters shift 
the density function to the right, while scale parameters affect the spread of the distribution.
It is shown that for all distributions scale parameters do not influence the usefulness of restarts and that location parameters 
only have a limited influence. This result simplifies the analysis of the usefulness of restarts.
The most important runtime probability distributions are the log-normal, the Weibull, and the Pareto distribution.
In this work, these distributions are analyzed for the usefulness of restarts.
Secondly, a condition for the optimal restart time (if it exists) is provided. The log-normal, the Weibull, 
and the generalized Pareto distribution are analyzed in this respect. Moreover, it is shown that the optimal restart time is also not 
influenced by scale parameters and that the influence of location parameters is only linear.
\end{abstract}

\section{Introduction}
Restart mechanisms are commonly used in day-to-day life. For example, when waiting for an email response, it is common to send the original email again after some time.
Therefore it is not surprising that restart strategies are used in subjects as diverse as biology (e.g. \cite{reuveni_role_2014}), physics (e.g. \cite{evans_diffusion_2011}) and computer science (e.g. \cite{Schoning_1999}, \cite{gomes1997heavy}). 
There are at least two large fields in computer science which utilize restarts. On the one hand network protocols often have a retransmission timer (e.g. TCP, see \cite{paxson2011computing}), 
after a timeout the loss of the package is assumed and therefore the message is resent. On the other hand, probabilistic algorithms often restart after a certain number of steps without finding a solution,
those algorithms are especially common for constraint satisfaction problems (CSP) and the well-known satisfiability problem (SAT).
Although in practical use, our impression is that the power of this algorithm paradigm is still underestimated.
Restarts can be used to improve the performance of an algorithm in regards to various measures. 
For example, restarting the algorithm can help to improve the completion probability when a deadline is present.
This model was studied by Wu \cite{wu2006randomization} while Lorenz \cite{lorenz2016completion} examined completion probabilities for parallel algorithms using restarts. Another measure which can benefit from restarts is the expected runtime.

An important class of distributions for which restarts often improve the expected runtime is the so-called class of heavy-tailed distributions. Heavy-tailed distributions have a tail which decays slower than an exponential. 
Crovella et al. \cite{crovella1998heavy} showed that transmission times in the world wide web follow a power-law tail which is a subclass of heavy-tailed distributions.  
Gomes et al. \cite{gomes1997heavy} observed that some instances in CSP also show a power-law tail, while
Caniou and Codognet  \cite{caniou2013sequential} examined that the log-normal distribution is a good fit for some problems in constraint satisfaction. The log-normal distribution also belongs to the 
class of heavy-tailed distributions.

Luby et al. \cite{luby_optimal_1993} introduced two of the most important restart strategies. The fixed cut-off strategy always restarts after the same (fixed) number of steps. 
They showed that this strategy is optimal for a certain (possibly infinite) number of steps.
However, in general finding the right number of steps before restarting requires extensive knowledge of the distribution.
The second strategy which Luby et al. introduced is called Luby's (universal) strategy which slowly increases the
restart times. This strategy does not require any knowledge of the underlying distribution and 
compared to the optimal strategy the expected runtime of a process utilizing Luby's strategy is only higher by a logarithmic factor. 

Since Luby's strategy does not require any a-priori knowledge of the distribution, it is nowadays more commonly used than the fixed cut-off strategy. 
One could, however, argue that better strategies are possible since, 
in a few cases, some information is available prior to the experiment. For instance, for many algorithms, the distribution for a certain class of problems has been observed empirically. 
Arbelaez et al. \cite{arbelaez2016learning} use a machine learning approach to predict the runtime distributions of several randomized algorithms.
This knowledge can be used to obtain a better restart strategy for this class of problems. Then, a speedup of, at least, 
a logarithmic factor can be expected (compared to Luby's strategy). Such a factor cannot be ignored in practice.


Choosing the wrong restart strategy can result in expected runtimes which are much worse than not restarting at all. On the other hand,
a good choice regarding the restart strategy can result in a super-exponential speedup. We believe that not enough attention has been paid
to this field of research.

\textbf{Our contribution:}
A condition for the usefulness of restart (Theorem \ref{theo:sufficient}) and a condition for optimal restart times (Theorem \ref{theo:optimal_condtion}) is obtained.
It is shown that scale parameters neither influence the usefulness nor the optimal restart times
(Theorem \ref{theo:scale} and \ref{theo:optimal_scale}). 
The log-normal and the generalized Pareto distribution are analyzed for the usefulness of restarts and their optimal restart times.
The Weibull distribution is also studied for its optimal restart times.
Parameter settings for all these distributions for which restarts are useful are obtained.
Finally, it is shown that the influence of location parameters on the usefulness of restarts is limited. This is, for all distributions discussed
here, restarts are still useful when a location parameter is present. The influence of a location Parameter on the optimal restart time
is just linear.

\section{Preliminaries}
In this section, the notation used throughout this work is introduced.
Let $X$ be a real-valued random variable. Then $F_X(t)=\textnormal{Pr}(X\leq t)$ is the cumulative distribution function (cdf) of $X$
and its derivative $f_X(t)=\frac{d}{dt} F(t)$ is the density function of $X$. 
In many cases the quantile function of the random variable $X$ is helpful, it is defined as follows:
\begin{definition}[\cite{norman1994continuous}]
 Let $X$ be a real-valued random variable with cumulative distribution function $F_X:\R \rightarrow [0,1]$.
 Then the Quantile function $Q_X: [0,1] \rightarrow \R$ is given by
 \begin{equation}
  Q_X(p)= \textnormal{inf}\{ x \mid F(x) \geq p \}.
 \end{equation}
\end{definition}
For continuous, strictly monotonically increasing cumulative distribution functions $F_X$ the quantile function $Q_X$ is the inverse function of $F_X$. 

Luby et al. defined the fixed cut-off strategy.
\begin{definition}[\cite{luby_optimal_1993}]
 Let $X$ be a random variable describing the runtime of a probabilistic algorithm $\mathcal{A}$ on some specific input. Given any value $t \in \R_+$
 a new algorithm $\mathcal{A}_t$ is obtained by restarting $\mathcal{A}$ after time $t$ has passed without finding a solution.
 Then $X_t$ is a random variable describing the runtime of $\mathcal{A}_t$.
\end{definition}
When talking about restarts, restarts using the fixed cut-off strategy are meant.
The notion of usefulness is often used in this work. Restarts are called useful if there is a $t>0$ with $E[X_t]<E[X]$.
Throughout this work, only real-valued random variables such that $F_X$, $f_X$ and $Q_X$ exist are considered.
If it is clear from the context, the subscript $X$ is omitted for the functions defined here. For the results presented here,
we assume that the number of restarts is not limited.

\section{Main Results}
\label{sec:efficient}

 Before employing a restart strategy, it should first be considered under which conditions restarts are useful at all. 
 Moorsel and Wolter \cite{moorsel_analysis_2004} obtained a condition for the usefulness of restarts: Let $T$ be a random variable describing the runtime of the process if there is a $t>0$ with $E[T]< E[T-t\mid T>t]$,
then restarts are useful. They showed that this condition is sufficient and necessary if the mean $E[T]$ exists.
This is a property which is often shown by heavy-tailed distributions. However, there are heavy-tailed distributions which do not fulfill their conditions, and there are also light-tailed distributions
which do fulfill this condition. 
In this section, another condition for the usefulness of restarts is provided and it is applied to several distributions.
 
 \subsection{Effective Restarts}
 There are several ways to describe a dataset. Two of the most commonly used values are the median and the mean, both of which describe a 'typical' value for this dataset. 
 If the mean lies to the right of the median one can speak of (positively) skewed data. While outliers contribute linearly
 to the mean, the median is very resistant to outliers. Therefore a big difference between the median and the mean can be explained by either many outliers or a few, but extreme outliers.
 In both of these cases, restarts can be an efficient way to reduce outliers. 
 Thus comparing the mean to the median yields a simple condition for the usefulness of restarts: If $Q(0.5)/E[X] < 0.5$, then restarts are beneficial. This holds because the expected runtime $E[X_{Q(p)}]$ is bounded from above by $\frac{Q(p)}{p}$. This idea can be easily generalized.
 If the mean is large because of a small number of disproportionately long runs, but there are also many short runs, then restarts are useful.
 Measuring this inequality is a well-known field in economics which is known as income inequality metrics. One of those metrics, the Lorenz curve, turns out to be helpful in the following.
 
%
%
%
%
%
%
\begin{definition}[\cite{lorenz1905methods}]
 Let $X$ be a real-valued random variable. Then the Lorenz curve $L:[0,1]\rightarrow [0,1]$ is given by:
 \begin{align}
  &L(p)= \frac{\int_0^p{Q(x)\mathrm{d}x}}{E[X]}.
\end{align}
\end{definition}

The derivative $L'$ of $L$ is given by $L'(p)=\frac{Q(p)}{E[X]}$. If the mean is infinite, then it is clear that restarts are always useful as long as some quantile exists.
This is a property which can, for example, be observed for some power-laws.
The next theorem provides a necessary and sufficient condition.
\begin{theorem}
\label{theo:sufficient}
 Let $X$ be a real-valued random variable, then restart are useful if and only if there is a $p \in [0,1)$ such that
 \begin{equation}
  \label{eq:core_condition}
  (1-p)L'(p)+L(p)<p.
 \end{equation}
\end{theorem}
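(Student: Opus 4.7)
The plan is to translate the direct probabilistic description of $E[X_t]$ into a statement about the Lorenz curve by parametrizing the cut-off $t$ via its quantile level $p=F(t)$, then read off the stated inequality.

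First, I would compute $E[X_t]$ from first principles. Conditioning on whether the first ``attempt'' finishes by time $t$ gives the renewal-style identity
\begin{equation*}
E[X_t]=F(t)\cdot E[X\mid X\le t]+(1-F(t))\bigl(t+E[X_t]\bigr),
\end{equation*}
which solves to $E[X_t]=E[\min(X,t)]/F(t)$. This is the only probabilistic input; from here everything is deterministic manipulation of $F$ and $Q$.

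Next, I would re-express $E[\min(X,t)]$ on the quantile side. Using $E[\min(X,t)]=\int_0^t(1-F(x))\D x$ (valid for non-negative runtimes) and the change of variable $x=Q(u)$, which is the inverse of $u=F(x)$ almost everywhere, I get $\int_0^{F(t)}(1-u)Q'(u)\D u$. Integration by parts, with $Q(0)=0$, yields
\begin{equation*}
E[\min(X,Q(p))]=(1-p)Q(p)+\int_0^p Q(u)\D u.
\end{equation*}
Dividing by $E[X]$ and recognising $L'(p)=Q(p)/E[X]$ and $L(p)=\int_0^p Q(u)\D u/E[X]$ gives the clean identity
\begin{equation*}
\frac{E[X_{Q(p)}]}{E[X]}=\frac{(1-p)L'(p)+L(p)}{p}.
\end{equation*}
Thus $E[X_{Q(p)}]<E[X]$ is literally equivalent to inequality \eqref{eq:core_condition}. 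For the ``only if'' direction I then note that every useful cut-off $t>0$ induces $p:=F(t)\in(0,1)$, and replacing $t$ by $Q(p)\le t$ does not increase $E[\min(X,t)]/F(t)$ on a flat piece of $F$, so the existence of a useful $t$ forces the existence of a useful $p$; the $p=0$ boundary is harmless because for $Q(0)=0$ the left-hand side of \eqref{eq:core_condition} equals $0$, which is never strictly less than $p=0$.

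The main obstacle I expect is the clean handling of the change of variables when $F$ is merely non-decreasing rather than strictly monotone: one has to argue that flat pieces of $F$ and jumps of $Q$ do not break the identity $\int_0^t(1-F(x))\D x=\int_0^{F(t)}(1-u)Q'(u)\D u$, and that it suffices to consider cut-offs of the form $t=Q(p)$. Once that technicality is absorbed, the rest is an algebraic rearrangement and the equivalence between the restart inequality and \eqref{eq:core_condition} is immediate.
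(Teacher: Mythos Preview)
Your proposal is correct and follows essentially the same route as the paper: express $E[X_{Q(p)}]$ as $\frac{1-p}{p}Q(p)+\frac{1}{p}\int_0^p Q(u)\,\D u$, divide by $E[X]$, and identify the Lorenz-curve quantities. The only cosmetic differences are that you derive the restart formula $E[X_t]=E[\min(X,t)]/F(t)$ from the renewal identity and reach the quantile integral via the tail formula plus integration by parts, whereas the paper quotes the formula $E[X_{Q(p)}]=\frac{1-p}{p}Q(p)+E[X\mid X<Q(p)]$ from \cite{wolter2010stochastic} and substitutes $u=F(x)$ directly in the conditional expectation; you are also more explicit about the ``only if'' direction and the flat-$F$ technicality, which the paper glosses over.
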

\begin{proof}
 The expected runtime with an unbounded number of restarts after $Q(p)$ steps is given by (see \cite{wolter2010stochastic}):
 \begin{align}
  E[X_{Q(p)}]=\frac{1-p}{p}Q(p)+E[X \mid X < Q(p)]
 \end{align}
 The conditional expectation $E[X \mid X < Q(p)]$ is defined by $ \frac{\int_{0}^{Q(p)} xf(x) \D x}{p}$ which
 is equivalent to $\frac{\int_{0}^{p} Q(u) \D u}{p}$. This can be obtained by substitution $u=F(x)$.
 Inserting these identities into $E[X]> E[X_{Q(p)}]$ and dividing by $E[X]$ yields:
 \begin{align}
   p > (1-p)L'(p)+L(p).
 \end{align}
 This completes the proof.
 \qed
\end{proof}
The difference with the condition in \cite{moorsel_analysis_2004} is that the existence of $E[X]$ is not required. Also,
since $p$ is limited to $[0,1)$ it is algorithmically easy to find intervals for which restarts are useful, while in the condition in \cite{moorsel_analysis_2004} the variable is often unbounded.
If the condition in Equation \ref{eq:core_condition} would be an equality instead of an inequality, then the condition would describe quantiles where restarts are neither harmful nor helpful.
Wolter \cite{wolter2010stochastic} showed that for the exponential distribution restarts are neither helpful nor harmful. 

\subsection{Optimal Restarts}
\label{sec:optimal}
In the previous section, a condition for the usefulness of restarts was introduced. This section focuses on optimal restart times. 
Wolter \cite{wolter2010stochastic} provided a relationship between the optimal restart times and the inverse hazard rate.
Here a condition for the optimal restart time is shown by using quantile functions. This classification is used to analyze
the optimal restart times of several distributions.
It is shown that a condition for the optimal restart time can be expressed solely in terms of the quantile function.
\begin{theorem}
\label{theo:optimal_condtion}
 Let $X$ be a real-valued random variable with the quantile function $Q$ and its \anti $\Q$.
 Then all optimal restart times $Q(p)$ have to fulfill:
 \begin{equation}
  (p-1)Q(p)+p(1-p)Q'(p)-\Q(p)+\Q(0)=0,
 \end{equation}
 where $Q'$ is the derivative of the quantile function $Q$.
\end{theorem}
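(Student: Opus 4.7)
The plan is to reduce the optimization problem to a one-variable calculus exercise by re-parameterizing everything through the quantile function and then differentiating in $p$. Recall that Theorem \ref{theo:sufficient}'s proof gives
\begin{equation*}
E[X_{Q(p)}] = \frac{1-p}{p}\,Q(p) + \frac{\int_0^p Q(u)\,\D u}{p}.
\end{equation*}
Using $\Q$ as an antiderivative of $Q$, the integral becomes $\Q(p)-\Q(0)$, so I would write
\begin{equation*}
E[X_{Q(p)}] = \frac{(1-p)Q(p) + \Q(p) - \Q(0)}{p}.
\end{equation*}
Any optimal restart time $Q(p^\ast)$ in the interior of $(0,1)$ must correspond to a critical point of this function of $p$.

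Next I would differentiate. Setting $g(p) := (1-p)Q(p) + \Q(p) - \Q(0)$, a short product-rule computation gives
\begin{equation*}
g'(p) = -Q(p) + (1-p)Q'(p) + Q(p) = (1-p)Q'(p),
\end{equation*}
since $\Q' = Q$ by definition. Applying the quotient rule,
\begin{equation*}
\frac{\D}{\D p}\,E[X_{Q(p)}] = \frac{p\,g'(p) - g(p)}{p^2} = \frac{p(1-p)Q'(p) - (1-p)Q(p) - \Q(p) + \Q(0)}{p^2}.
\end{equation*}
Setting the numerator to zero and using $-(1-p)Q(p)=(p-1)Q(p)$ yields exactly the claimed equation.

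The verification is essentially mechanical; the only conceptual point is justifying that the derivative exists, which requires $Q$ to be differentiable at $p$. Since the statement restricts attention to distributions for which $F$, $f$, and $Q$ all exist (as set up in the Preliminaries), and interior optima of $E[X_{Q(p)}]$ are what we are after, this is unproblematic. The main obstacle, if any, is being careful that the optimum is interior rather than attained as $p\to 0$ or $p\to 1$; but the theorem only claims a necessary condition for optimal restart times, so boundary behavior need not be addressed here. A brief remark that $\Q(0)$ plays the role of the integration constant and therefore cancels once $g(p)$ is differentiated (and only reappears via the definite integral in the numerator) rounds off the argument.
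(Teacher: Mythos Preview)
Your proposal is correct and follows essentially the same route as the paper: express $E[X_{Q(p)}]$ via the quantile function as $\frac{1-p}{p}Q(p)+\frac{\Q(p)-\Q(0)}{p}$, differentiate in $p$, set the derivative to zero, and clear the $p^2$ denominator to obtain the stated condition. You simply spell out the quotient-rule computation and add remarks on differentiability and interior optima that the paper leaves implicit.
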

\begin{proof}
 The expected runtime under restart after $Q(p)$ steps is given by: 
 \begin{equation}
  E[X_{Q(p)}]=\frac{1-p}{p}Q(p)+\frac{\Q(p)-\Q(0)}{p}.
 \end{equation}
 By equating the derivative with zero the function can be minimized. 
 After multiplying the derivative with $p^2$ the obtained condition is:
 \begin{equation}
  (p-1)Q(p)+p(1-p)Q'(p)-\Q(p)+\Q(0)=0,
  \label{eq:condition}
 \end{equation}
where $Q'$ is the derivative of $Q$. 
This completes the proof.
\qed
\end{proof}
Next, the influence of scale parameters on the condition from Theorem \ref{theo:optimal_condtion} is investigated.

 \subsection{Scale Parameter}
 \label{sec:scale_useful}
 In this section, it is shown that for every continuous family of distributions there is one parameter which does not have an effect:
 the scale parameter.
 
 \begin{definition}[\cite{meyer1987two}]
 \label{def:scale}
  Let $X$ be a real-valued, continuous random variable with cdf $F_X$. A new random variable $Y$ with cdf $F_Y$ is 
  obtained by the identity $F_X(x)=F_Y(\frac{x}{\beta})$ for $\beta > 0$. The parameter $\beta$ is called a scale-parameter. This is
 denoted as $Y \stackrel{\text{d}}{=} \beta X$.
 \end{definition}
 With this definition, the main results of the section can be derived.
 \begin{theorem}
 \label{theo:scale}
  Let $X$ be a real-valued, continuous random variable such that restarts are useful.
  Then for $Y \stackrel{\text{d}}{=} \beta X$ restarts are also useful for every $\beta > 0$. 
 \end{theorem}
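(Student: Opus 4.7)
The plan is to reduce the statement to Theorem \ref{theo:sufficient} by showing that the Lorenz curve is invariant under scale transformations. Since that theorem characterizes usefulness purely in terms of $L$ and $L'$, proving $L_Y = L_X$ immediately transfers usefulness from $X$ to $Y$.

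First, I would establish how the quantile function transforms under scaling. From Definition \ref{def:scale}, $F_Y(y) = F_X(y/\beta)$ (equivalently, $Y \stackrel{\text{d}}{=} \beta X$), so by the defining infimum of the quantile function, $Q_Y(p) = \beta Q_X(p)$ for every $p \in [0,1)$. Next I would compute the mean of $Y$: using the identity $E[X] = \int_0^1 Q_X(u)\,\D u$ (or linearity of expectation applied to $\beta X$), one gets $E[Y] = \beta E[X]$, where both sides are finite or both infinite simultaneously.

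With these two relations in hand, plug into the definition of the Lorenz curve:
\begin{equation}
L_Y(p) \;=\; \frac{\int_0^p Q_Y(x)\,\D x}{E[Y]} \;=\; \frac{\beta \int_0^p Q_X(x)\,\D x}{\beta\, E[X]} \;=\; L_X(p).
\end{equation}
Differentiating gives $L_Y'(p) = L_X'(p)$. Therefore the inequality $(1-p)L'(p) + L(p) < p$ holds at a point $p$ for $Y$ if and only if it holds at $p$ for $X$. By Theorem \ref{theo:sufficient} applied in both directions, restarts are useful for $Y$ exactly when they are useful for $X$, which proves the claim (and in fact its converse).

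The only delicate point I anticipate is the case $E[X] = \infty$, where the Lorenz curve needs to be interpreted carefully. In that regime the condition in Theorem \ref{theo:sufficient} is trivially satisfied (as noted just after Definition of $L$ in the excerpt, restarts are useful whenever some finite quantile exists), and this property is manifestly preserved by multiplying all quantiles by $\beta > 0$, so the argument still goes through without modification.
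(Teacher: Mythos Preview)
Your proposal is correct and follows essentially the same approach as the paper: both establish that $Q_Y=\beta Q_X$ and $E[Y]=\beta E[X]$, so the scale factor cancels in the Lorenz curve and the condition of Theorem~\ref{theo:sufficient} carries over to $Y$ iff it holds for $X$. Your version is slightly more explicit (writing out $L_Y=L_X$ and treating the infinite-mean case separately), but the underlying argument is identical.
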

 \begin{proof}
  Let $X$ and $Y=\beta X$ be random variables with $\beta > 0$. Then the quantile function $Q_Y$, its \anti $\Q_Y$ and the mean $E[Y]$
  are given by:
  \begin{align}
   & Q_Y(p) = \beta Q_X(p), \\
   & \Q_Y(p) = \beta \Q_X(p), \\
   & E[Y] = \beta E[X]
  \end{align}
  Thus, $Y$ fulfills the condition from Theorem \ref{theo:sufficient} iff $X$ fulfills it.
  \qed
 \end{proof}

%
%
%
Since the derivative $Q'_Y$ is given by $Q'_Y=\beta Q'_X$, the result can be extended to show that scale parameters 
do not change the optimal restart time. The proof is similar in its nature and is therefore omitted.
\begin{theorem}
\label{theo:optimal_scale}
 Let $X$ be a real-valued, continuous random variable and let $q \in (0,1)$ be such that $Q_X(q)$ is the optimal restart time.
 Let $\beta >0$ be a positive, real number, then $Q_{\beta X}(q)$ is the optimal restart time for the random variable $\beta X$.
\end{theorem}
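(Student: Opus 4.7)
The plan is to imitate the proof of Theorem \ref{theo:scale}, but now applied to the expression for $E[X_{Q(p)}]$ from the proof of Theorem \ref{theo:optimal_condtion}, and exploit the fact that the transformation $X\mapsto \beta X$ rescales that entire expression by a positive constant.

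First I would collect the scaling identities already used in the proof of Theorem \ref{theo:scale}: for $Y\stackrel{\text{d}}{=}\beta X$ with $\beta>0$ we have $Q_Y(p)=\beta Q_X(p)$, and therefore, integrating in $p$, also $\Q_Y(p)=\beta \Q_X(p)$, so in particular $\Q_Y(0)=\beta\Q_X(0)$. Plugging these into the formula
\begin{equation*}
 E[Y_{Q_Y(p)}]=\frac{1-p}{p}Q_Y(p)+\frac{\Q_Y(p)-\Q_Y(0)}{p}
\end{equation*}
derived in the proof of Theorem \ref{theo:optimal_condtion}, every term picks up a factor $\beta$, giving $E[Y_{Q_Y(p)}]=\beta\, E[X_{Q_X(p)}]$ for every $p\in(0,1)$.

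Since $\beta>0$, the map $p\mapsto E[Y_{Q_Y(p)}]$ is a positive constant multiple of the map $p\mapsto E[X_{Q_X(p)}]$. Hence both functions attain their infimum at exactly the same values of $p$. In particular, if $q\in(0,1)$ is such that $Q_X(q)$ is an optimal restart time for $X$, then $q$ minimizes $E[Y_{Q_Y(\cdot)}]$ as well, so $Q_Y(q)=\beta Q_X(q)$ is an optimal restart time for $Y=\beta X$, which is exactly the claim.

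There is no real obstacle here; the only thing to be careful about is to work with the runtime expression directly rather than with the necessary critical-point condition of Theorem \ref{theo:optimal_condtion}, because that condition alone identifies all stationary points rather than the minimizer. Alternatively, one can substitute $Q_Y$, $Q_Y'=\beta Q_X'$, and $\Q_Y$ into Equation \ref{eq:condition} and factor out $\beta$ to see that the condition for $Y$ at $p$ is $\beta$ times the condition for $X$ at $p$, so the sets of stationary points coincide and the global minimizer is preserved.
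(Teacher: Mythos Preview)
Your proposal is correct and matches the paper's intended approach: the paper does not actually write out a proof but simply remarks that $Q'_Y=\beta Q'_X$ and that the argument is ``similar in its nature'' to Theorem~\ref{theo:scale}, which is exactly the scaling-identity argument you carry out. Your additional observation---that working directly with $E[X_{Q(p)}]$ rather than only with the critical-point condition of Theorem~\ref{theo:optimal_condtion} guarantees preservation of the actual minimizer and not merely of stationary points---is a small but worthwhile refinement over what the paper hints at.
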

These findings show that scale parameters can be ignored in the analysis for restart times.
For several commonly used distributions the properties from Theorem \ref{theo:sufficient} and Theorem \ref{theo:optimal_condtion} can be applied.
 
 \subsection{Log-normal}
 \label{sec:log-normal-exist}
 Due to the central limit theorem, the log-normal distribution arises by the product of $n$ i.id random variables. 
 Barrero et al. \cite{barrero2015statistical} observed log-normally distributed times of several evolutionary algorithms, 
 including genetic programming, particle swarm optimization, and genetic algorithms.
 Muñoz et al. \cite{munoz2012run} empirically showed that the runtime of several path planning algorithms such as $A^*$ and $Theta^*$ 
 follow log-normal distributions. Frost et al. \cite{Frost1997CSP} argued that the runtime of several backtracking algorithms follow 
 log-normal distributions in the case of unsolvable binary CSP instances. Arbelaez \cite{arbelaez2013using} studied the runtime distributions of two SAT solvers and found
 that for randomly generated instances the log-normal distribution is a good fit. 
 Thus, the log-normal distribution is commonly used to describe the runtime of local search algorithms.
 The log-normal distribution is defined as follows:
 \begin{definition}[\cite{norman1994continuous}]
  Let $X$ be a real-valued random variable. If there are parameters $\mu >0, \sigma>0$, such that the random variable $U$ with
  \begin{equation}
    U = \frac{\log{(X)}-\mu}{\sigma}
  \end{equation}
  is standard normal distributed, then $X$ is said to be log-normally distributed. 
  In this case the mean $E[X]$ and the quantile function $Q_X$ are given by
  \begin{align}
   & E[X]=e^{\mu + \sigma^2/2}, \\
   & Q_X(p)=e^{\mu + \sigma \sqrt{2}\erf^{-1}(2p-1)},
  \end{align}
  where $\erf^{-1}$ is the inverse error function.
 \end{definition}
 The $\erf^{-1}$ function is not analytically solvable, but there are numerical approaches.
An \anti $\Q$ of $Q_X$ and the derivative $Q'$ can be obtained by calculations:
\begin{align}
 & \Q (p) =  -\frac{1}{2} e^{ \mu+\sigma^2/2}\erf\left(\frac{\sigma}{\sqrt{2}}-\erf^{-1}(2p-1) \right),\\
 & Q'(p) = e^{\mu + \sqrt{2} \cdot \sigma \cdot \erf^{-1}( 2 p-1) + \left(\erf^{-1}(2 p-1)\right)^2} \sigma \sqrt{2 \pi}.
\end{align}
Where $\erf$ is the error function which only can be computed
numerically. 
With these definitions the usefulness of restarts for the log-normal distribution can be estimated. 
\begin{theorem}
 Let $X$ be a log-normal distributed random variable. Then there is a $p\in(0,1)$ such that 
 \begin{equation}
  E[X_{Q(p)}]<E[X].
 \end{equation}
\end{theorem}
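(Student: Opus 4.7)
The plan is to combine Theorem \ref{theo:scale} with Theorem \ref{theo:sufficient}. Since $X = e^{\mu} \cdot e^{\sigma U}$ with $U$ standard normal, the factor $e^{\mu}$ is a scale parameter in the sense of Definition \ref{def:scale}, so Theorem \ref{theo:scale} lets me assume $\mu = 0$ from the outset. Under this reduction the key quantities simplify to $E[X] = e^{\sigma^{2}/2}$, $Q(p) = e^{\sigma\sqrt{2}\,\erf^{-1}(2p-1)}$, and (using $\erf^{-1}(-1) = -\infty$) $\Q(0) = -\tfrac{1}{2}e^{\sigma^{2}/2}$.

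Multiplying the condition of Theorem \ref{theo:sufficient} through by $E[X]$, I have to exhibit a $p \in [0,1)$ with $(1-p)\,Q(p) + \Q(p) - \Q(0) < p\,E[X]$. I would substitute $y := \erf^{-1}(2p-1)$, so that $1-p = \tfrac{1}{2}\mathrm{erfc}(y)$ and $Q(p) = e^{\sigma\sqrt{2}\,y}$, and use the reflection identity $\mathrm{erfc}(-z) = 2 - \mathrm{erfc}(z)$ to rewrite $\Q(p) - \Q(0) = e^{\sigma^{2}/2} - \tfrac{1}{2}e^{\sigma^{2}/2}\,\mathrm{erfc}(y - \sigma/\sqrt{2})$. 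Collecting terms, the target inequality becomes equivalent to
\begin{equation*}
\mathrm{erfc}(y)\bigl[e^{\sigma\sqrt{2}\,y} + e^{\sigma^{2}/2}\bigr] < e^{\sigma^{2}/2}\,\mathrm{erfc}\!\left(y - \tfrac{\sigma}{\sqrt{2}}\right),
\end{equation*}
and it suffices to exhibit a single $y \in \R$ for which this holds.

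To produce such a $y$ I would send $y \to \infty$ and invoke the classical expansion $\mathrm{erfc}(y) = \tfrac{e^{-y^{2}}}{y\sqrt{\pi}}\bigl(1 + O(y^{-2})\bigr)$. Factoring the common exponential $e^{-y^{2} + \sigma\sqrt{2}\,y}$ out of both sides, the leading coefficients are $\tfrac{1}{y\sqrt{\pi}}$ on the left and $\tfrac{1}{(y - \sigma/\sqrt{2})\sqrt{\pi}}$ on the right. Since $y - \sigma/\sqrt{2} < y$ the right-hand coefficient is strictly larger, and the extra summand $e^{\sigma^{2}/2}$ on the left is exponentially dominated by $e^{\sigma\sqrt{2}\,y}$. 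This furnishes a $p \in (0,1)$ close to $1$ that satisfies the condition of Theorem \ref{theo:sufficient}, completing the argument.

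The main obstacle is that both sides of the simplified inequality vanish in the limit $y \to \infty$, so the argument cannot rely on any crude bound but must compare relative orders carefully. Concretely, one must verify that the leading-coefficient gap, which is of relative size $\sigma/(\sqrt{2}\,y)$, genuinely dominates the $O(y^{-2})$ correction from the next term of the $\mathrm{erfc}$ asymptotic and the $e^{\sigma^{2}/2}/e^{\sigma\sqrt{2}\,y}$ contribution on the left. Once these routine but delicate estimates are carried out, the conclusion holds uniformly for every $\sigma > 0$, so restarts are useful for every log-normal distribution.
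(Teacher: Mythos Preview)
Your argument is correct. Both you and the paper reduce to $\mu=0$ via Theorem~\ref{theo:scale} and examine the regime $p\to 1$, but the mechanics differ. The paper does not attack the inequality of Theorem~\ref{theo:sufficient} directly; instead it observes that $E[X_{Q(p)}]\to E[X]$ as $p\to 1$ and then shows, via two applications of L'Hospital's rule, that the derivative $E'[X_{Q(p)}]$ tends to $+\infty$ (the decisive term being $(1-p)Q'(p)\to\infty$). From these two facts it follows immediately that $E[X_{Q(p)}]$ must dip below $E[X]$ somewhere to the left of $1$. Your route, by contrast, rewrites the Theorem~\ref{theo:sufficient} inequality in the variable $y=\erf^{-1}(2p-1)$ and compares the two sides through the asymptotic expansion $\mathrm{erfc}(y)\sim e^{-y^{2}}/(y\sqrt{\pi})$. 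The paper's derivative argument is arguably cleaner precisely because it sidesteps the issue you flag in your last paragraph: by proving divergence rather than a strict inequality between two vanishing quantities, it never needs to track the subleading $O(y^{-2})$ terms or verify that the $\sigma/(\sqrt{2}\,y)$ gap dominates them. Your approach is more self-contained (it uses only Theorem~\ref{theo:sufficient} and a standard asymptotic), but the price is exactly that second-order bookkeeping.
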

\begin{proof}
 Due to Theorem \ref{theo:scale}, the scale parameter $e^\mu$ can be ignored for the analysis of the usefulness of restarts. Therefore, let $\mu = 0$.
 Note that $E[X_{Q(p)}]$ converges to $E[X]$ as $p$ approaches one. The derivative with respect to $p$ of the expected runtime $E'[X_{Q(p)}]$ is obtained similarly to Theorem \ref{theo:optimal_condtion} and is given by:
 \begin{equation}
  E'[X_{Q(p)}]=\frac{(p-1)}{p^2}Q(p)+\frac{(1-p)}{p}Q'(p)-\frac{1}{p^2}(\Q(p)+\Q(0)).
 \end{equation}
 The limit of $E'[X_{Q(p)}]$ as $p$ approaches one is analyzed. In this case,  it can be seen that $\frac{(p-1)}{p^2}Q(p)$ converges to zero and 
 $\frac{1}{p^2}(\Q(p)+\Q(0))$ converges to $E[X]$. Therefore the analysis focuses on the limit of $(1-p)Q'(p)$. 
 \begin{align}
  & \lim_{p\rightarrow 1}(1-p)Q'(p) = \lim_{p\rightarrow 1}\frac{(1-p)}{e^{-\sqrt{2} \cdot \sigma \cdot \erf^{-1}( 2 p-1) - \left(\erf^{-1}(2 p-1)\right)^2}} \sigma \sqrt{2 \pi} \\
= & \lim_{p\rightarrow 1} \frac{\sigma \sqrt{2} e^{\sqrt{2} \cdot \sigma \cdot \erf^{-1}( 2 p-1)}}{\sqrt{2}\sigma+\erf^{-1}(2p-1)}
 = \lim_{p\rightarrow 1} \sigma^2 e^{\sigma \sqrt{2} \erf^{-1}(2p-1)} \rightarrow \infty
 \end{align}
This limit is obtained by applying L'Hospital's rule twice.
Thus, since $E[X_{Q(p)}]$ converges to $E[X]$ and $E'[X_{Q(p)}]$ approaches positive infinity, there is a $p \in (0,1)$ with $E[X_{Q(p)}]<E[X]$. 
\qed
\end{proof}

Figure \ref{fig:lognormal_region} shows parameter combinations for $p$ and $\sigma$ for which restarts are useful.
Since neither the error function nor the inverse error function can be solved analytically, numerical 
methods have been used to obtain those results. 

\begin{figure}[tbp]
\centering
\floatbox[{\capbeside\thisfloatsetup{capbesideposition={left,top},capbesidewidth=6cm}}]{figure}[\FBwidth]
{\caption{A regionplot for the log-normal distribution. 
 The blue area denotes parameter settings which fulfill the condition from Theorem \ref{theo:sufficient}.
 For $\sigma < 0.48$ the numerical approach could not find values of $p$ such that restarts are useful. 
 For high values of $\sigma$ even low values of $p$ yield an improved expected 
runtime under restart, while for low values of $\sigma$ only very high values of $p$ improve the expected runtime.}\label{fig:lognormal_region}}
 {\includegraphics[width=0.8\linewidth]{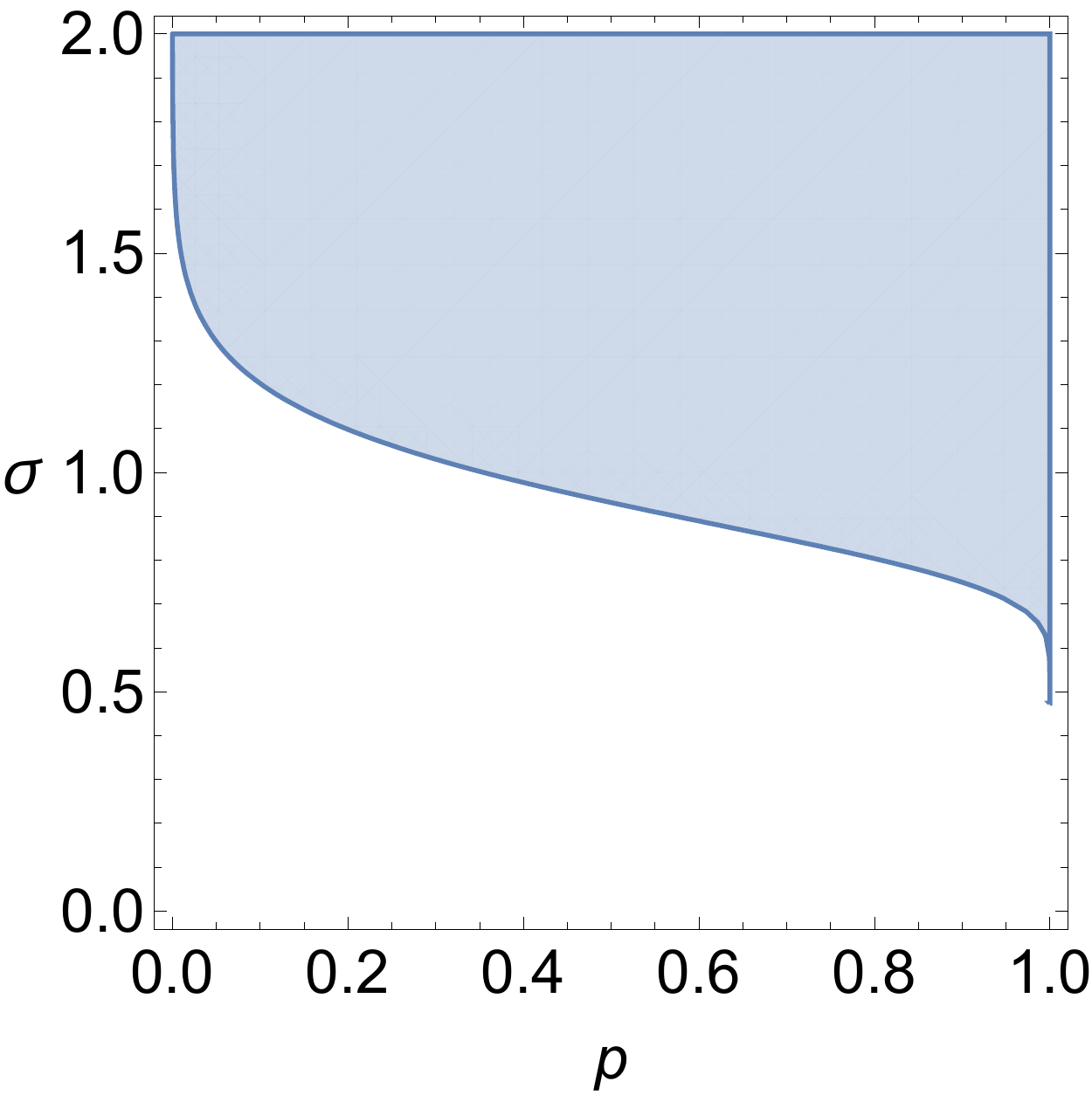}}
 \end{figure}

The optimal restart times, as presented in 
Theorem \ref{theo:optimal_condtion}, are shown in Figure \ref{fig:lognormal_optimal}.
It can be seen that for high $\sigma$ values the optimal
restart time quickly approaches $Q(0)$, while for small $\sigma$ values the optimal restart time converges to $Q(1)$.

\begin{figure}[tbp]
 \centering
\subfloat[The brown line denotes parameter settings of $\sigma$ and $p$ which minimize the expected runtime under restart.]
{
  \includegraphics[width=0.33 \textwidth]{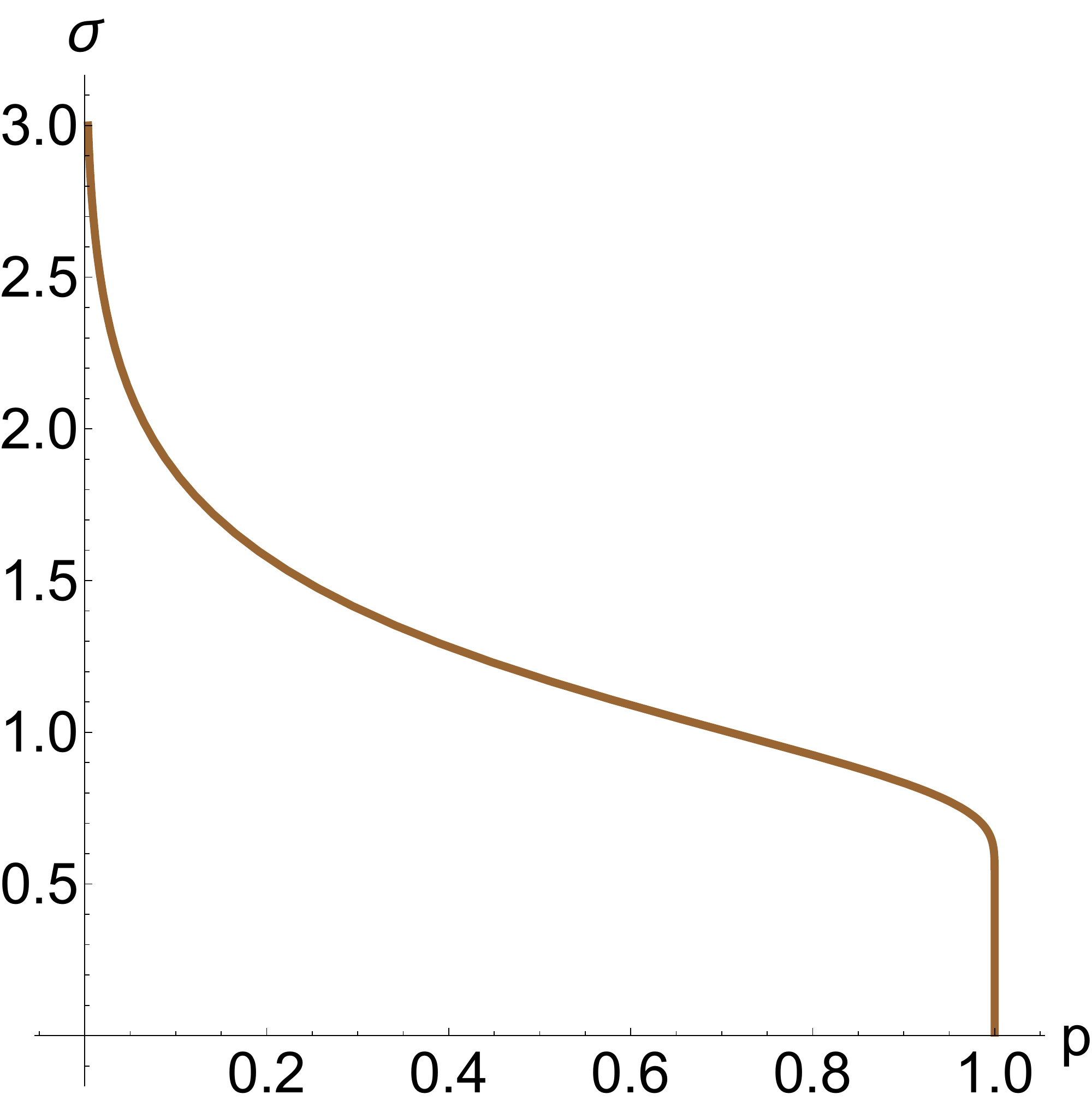}
  \label{fig:lognormal_optimal}
}
\qquad
\subfloat[This plot shows the expected runtime for a log-normal distribution without restart as a dashed line. 
 The blue line represents the expected runtime with restarts at the optimal restart time.]
 { 
    \includegraphics[width=0.58 \linewidth]{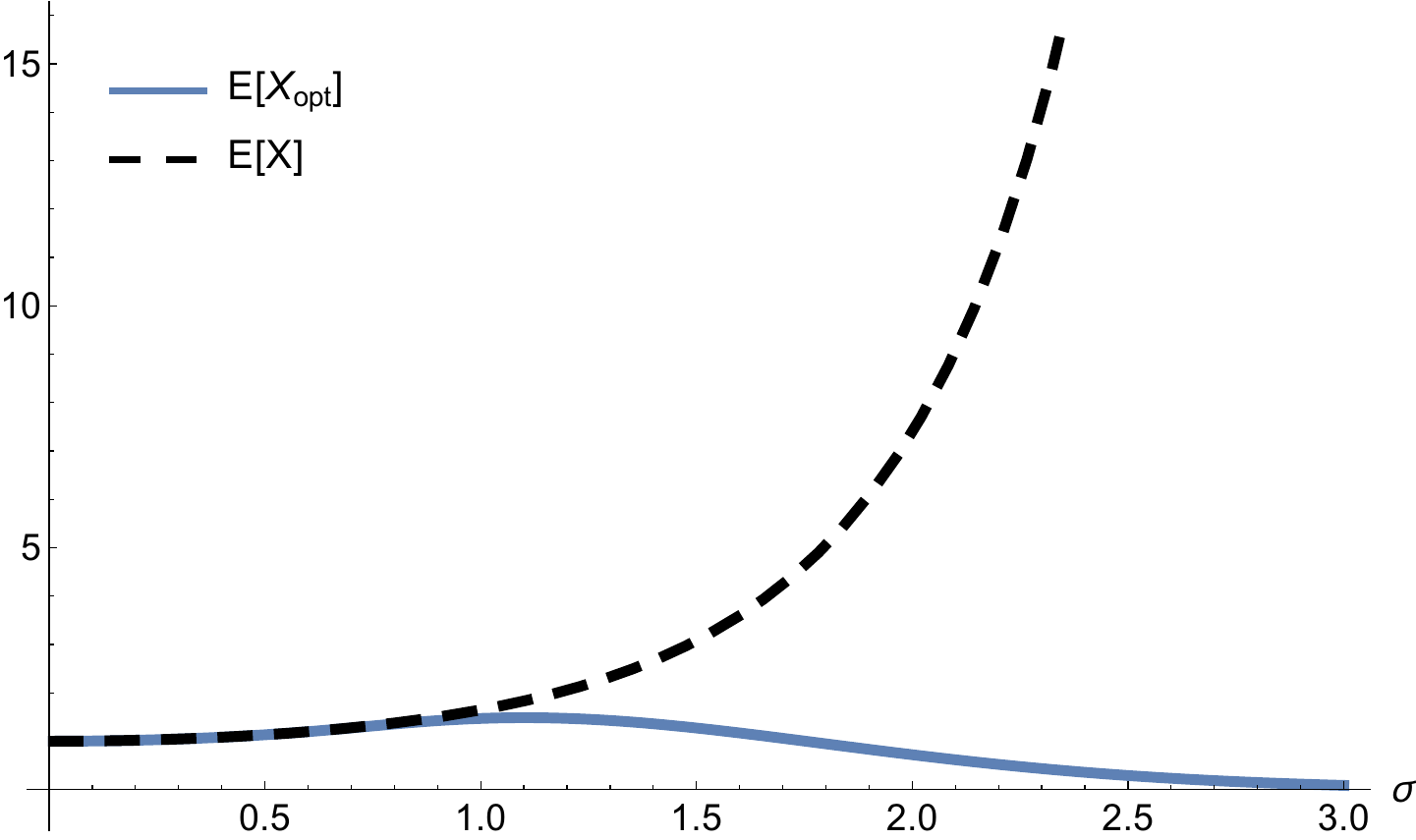}
    \label{fig:lognormal_runtime}
 }
\caption{The figures show the optimal restart quantiles for a log-normal distribution with $\mu=1$ on the left, and 
the expected runtime and the optimal mean with restarts on the right.}
\end{figure}

Figure \ref{fig:lognormal_runtime} shows the comparison of a log-normal distribution without restarts to a log-normal distribution
with restart at the optimal time.
The plot has the same shape for all values of $\mu$, and only the values on the $y$-axis differ. It can be observed 
for values up to approximately $\sigma \approx 0.8$ the difference in the expected runtime is marginal. 
On the other hand, while the expected value $e^{\mu + \frac{\sigma^2}{2}}$ behaves super-exponentially for 
increasing values of $\sigma$, the expected runtime with restarts after an optimal number of steps starts to 
decrease at about $\sigma \approx 1.1$. Therefore, in real applications restarts should only be employed for
high values of $\sigma$.

 \subsection{Generalized Pareto}
 \label{sec:pareto-exist}
 The Pickands–Balkema–de Haan theorem (see \cite{balkema1974residual}) states that the excess probability
 $\Pr(X-u \leq y \mid X > u)=\frac{F(u+y)-F(u)}{1-F(u)}$ converges in distribution towards the generalized Pareto distribution 
 for a large number of distributions as $u \rightarrow \infty$. Due to this, it can be used to model the tail of 
 distributions. The generalized Pareto distribution (GP) includes the exponential, the Pareto, and the uniform distribution.
Since the Pareto distribution is a subclass of the GP, the GP is also well suited to describe power-law decays in the tail. 
Corvella et al. \cite{crovella1998heavy} observed that network transmission times follow a power-law in the tail.
Gomes and Selman \cite{gomes1997heavy} found that the runtime of the quasi-group completion problem is also well described by a power-law.
At this point, the generalized Pareto distribution is formally defined:
\begin{definition}[\cite{norman1994continuous}]
  Let $X$ be a real-valued random variable. If the cdf $F_X$ is given by
 \begin{equation}
  F_X(x)=1-\left( 1+\frac{k x}{\sigma} \right)^{-1/k}
 \end{equation}
  for $\sigma > 0, k \in \R$, then $X$ has a generalized Pareto (GP) distribution. 
\end{definition}
For $k<1$ the mean of the GP is given by $E[X]=\frac{\sigma}{1-k}$, otherwise the mean is infinite.
For $k=0$ the GP is equivalent to the exponential distribution, and for all $k>0$ it takes
the form of a Pareto distribution. For $k\leq -0.5$ the GP has finite support, and for $k=-1$ it becomes a uniform distribution. 
Compare \cite{norman1994continuous} for these results. 
The Quantile function $Q$, an \anti $\Q$ and the derivative $Q'$ are given by:
\begin{align}
 & Q(p)= \frac{\sigma}{k} \Big( (1-p)^{-k} -1 \Big), \\
 & \Q(p)= -\frac{\sigma}{k} \Big( p+ \frac{(1-p)^{1-k}}{1-k} \Big), \\
 &  Q'(p)=\sigma (1-p)^{-1-k}.
\end{align}
These definitions can be used to analyze the usefulness of restarts.
\begin{theorem}
 Let $X$ be a generalized Pareto distribution, there is a $p \in [0,1)$ with 
 \begin{equation}
  E[X_{Q(p)}]<E[X]
 \end{equation}
 if and only if $k>0$.
\end{theorem}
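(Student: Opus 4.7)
The plan is to apply Theorem~\ref{theo:sufficient} after splitting on whether $E[X]$ is finite. When $k \geq 1$ the mean is infinite, yet for any $p \in (0,1)$ the quantile $Q(p)=\frac{\sigma}{k}((1-p)^{-k}-1)$ is finite and $Q$ is continuous and bounded on $[0,p]$, so $\int_0^p Q(u)\,\D u$ is finite. Hence $E[X_{Q(p)}]$ is finite and strictly less than $E[X]=\infty$. (The boundary $k=1$ requires the logarithmic antiderivative in place of the formula for $\Q$ displayed above, but the conclusion is unchanged.) So restarts are useful whenever $k \geq 1$.

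Now let $k<1$ with $k\neq 0$, so $E[X]=\sigma/(1-k)$ is finite and Theorem~\ref{theo:sufficient} says restarts are useful iff some $p\in[0,1)$ satisfies $(1-p)Q(p)+\Q(p)-\Q(0)<p\,E[X]$. Substituting the closed forms, the $-\sigma p/k$ term inside $\Q(p)$ combines with the $-\sigma(1-p)/k$ term from $(1-p)Q(p)$ to leave only a constant, and the remaining $(1-p)^{1-k}$ contributions factor through $1-\frac{1}{1-k}=\frac{-k}{1-k}$, collapsing the left side to
\begin{equation*}
(1-p)Q(p)+\Q(p)-\Q(0)=\frac{\sigma}{1-k}\bigl(1-(1-p)^{1-k}\bigr).
\end{equation*}
Dividing by $\sigma/(1-k)>0$, the inequality reduces to $(1-p)^{1-k}>1-p$. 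On $p\in(0,1)$ the base $1-p$ lies in $(0,1)$, so this is equivalent to $1-k<1$, i.e.\ $k>0$. For $k<0$ the reverse inequality $(1-p)^{1-k}<1-p$ holds for every $p$, and the exponential case $k=0$ yields equality for every $p$ (as noted after Theorem~\ref{theo:sufficient}); in neither subcase are restarts useful.

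Combining the two cases gives the claimed equivalence. The only non-routine step is spotting the algebraic cancellation that collapses the left-hand side into a single $(1-p)^{1-k}$ term; once that is in hand, comparing $(1-p)^{1-k}$ with $(1-p)^{1}$ on the unit interval isolates the sign of $k$ as the sole determinant, and the finite-mean and infinite-mean regimes match up seamlessly at $k=1$.
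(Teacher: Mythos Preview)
Your proof is correct and follows essentially the same route as the paper: both split off the infinite-mean case $k\ge 1$, then for $k<1$ reduce the inequality of Theorem~\ref{theo:sufficient} to $(1-p)^{1-k}>1-p$ and read off the sign of $k$. The only cosmetic differences are that the paper first normalizes $\sigma=1$ via Theorem~\ref{theo:scale} and phrases the computation through the Lorenz curve, finishing with the equivalent reformulation $k\log(1-p)<0$.
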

\begin{proof}
Due to Theorem \ref{theo:scale}, the scale parameter $\sigma$ does not influence the usefulness of restarts.
Therefore, define $\sigma =1$ for the rest of this proof.
For the case $k\geq 1$ the mean is infinite. Thus, restarts are useful for all 
$p \in (0,1)$. Thus, only the  case where $k<1$ is considered.
Then the Lorenz curve $L$ and its derivative $L'$ are given by:
\begin{align}
 & L(p) = \frac{\Q(p)-\Q(0)}{E[X]}= \frac{1}{k}\Big( 1 - (1-p)^{1-k} -(1-k)p \Big), \\
 & L'(p)= \frac{Q(p)}{E[X]}=\frac{1-k}{k}\Big( (1-p)^{-k}-1 \Big).
\end{align}
By inserting the Lorenz function and its derivative in inequality \ref{eq:core_condition}, the following can be obtained by some calculations:
\begin{align}
 &p>-(1-p)^{1-k}+1 \\
 \Leftrightarrow & k \log{(1-p)}<0
\end{align}
For $p\in (0,1)$ the left side of the equation $k \log{(1-p)}$ is negative if and only if $k$ is positive.
Therefore restarts are useful for all $p\in [0,1)$ iff $k>0$.
 \qed
\end{proof}
 This is consistent with other results in this field. 
Wolter \cite{wolter2010stochastic} showed that Restarts are useful for the Pareto distribution,
they are not useful for the uniform distribution and are neither helpful nor harmful for the exponential distribution.
The result presented here is stronger since all of those three distributions are subclasses of the generalized Pareto distribution
while other distributions can be obtained by using different parameters. The next theorem analyzes the optimal restart time.

\begin{theorem}
 Let $X$ be a generalized Pareto distributed random variable with $k>0$, then the optimal restart time is zero and the optimal expected runtime 
 under restart $E[X_{Q(0)}]$ is $E[X_{Q(0)}]=\sigma$.
\end{theorem}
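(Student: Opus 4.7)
The plan is to derive a closed-form expression for $E[X_{Q(p)}]$ by plugging the explicit $Q$ and $\Q$ for the generalized Pareto distribution into the identity $E[X_{Q(p)}]=\frac{1-p}{p}Q(p)+\frac{\Q(p)-\Q(0)}{p}$ used in the proof of Theorem~\ref{theo:optimal_condtion}, and then to show by a single Bernoulli-type inequality that this expression is bounded below by $\sigma$ on $(0,1)$.

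The first step is the algebraic collapse. Using $\Q(0)=-\sigma/(k(1-k))$, the three summands contain combinations of $(1-p)$, $(1-p)^{-k}$ and $(1-p)^{1-k}$ whose linear-in-$p$ pieces cancel, leaving
\begin{equation}
E[X_{Q(p)}] = \frac{\sigma\bigl(1-(1-p)^{1-k}\bigr)}{p(1-k)} \qquad (k>0,\ k\neq 1).
\end{equation}
The degenerate case $k=1$ has to be handled separately because the formula for $\Q$ carries a $(1-k)^{-1}$ singularity; a direct integration of $Q$ for $k=1$ yields $E[X_{Q(p)}]=-\sigma\ln(1-p)/p$ instead.

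The second step is the lower bound $E[X_{Q(p)}]\ge \sigma$ on $(0,1)$. For $0<k<1$ this reduces, after multiplying through by $p(1-k)>0$, to Bernoulli's inequality in the form $(1-p)^{1-k}\le 1-(1-k)p$. For $k>1$ the factor $1-k$ is negative, so the direction of the required inequality flips and one needs $(1-p)^{1-k}\ge 1+(k-1)p$, which is again Bernoulli, now in its form valid for negative exponents. The residual case $k=1$ reduces to the standard estimate $-\ln(1-p)\ge p$. A Taylor expansion around $p=0$ then shows $E[X_{Q(p)}]\to\sigma$ as $p\to 0^+$, so the lower bound is sharp and is attained in the limiting sense corresponding to the restart time $Q(0)=0$.

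The main obstacle I expect is the algebraic simplification of the three initial summands into the compact quotient displayed above: the terms must be grouped so that the pieces linear in $p$ vanish and the surviving pieces involving $(1-p)^{1-k}$ combine with the correct sign to yield the final expression. Once the closed form is in hand, the minimization reduces to a one-line Bernoulli argument in each of the regimes $0<k<1$ and $k>1$, plus the separate logarithmic computation at $k=1$.
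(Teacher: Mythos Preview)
Your proposal is correct, and the closed form $E[X_{Q(p)}]=\dfrac{\sigma\bigl(1-(1-p)^{1-k}\bigr)}{p(1-k)}$ does drop out exactly as you describe; the Bernoulli-type bounds in the three regimes $0<k<1$, $k=1$, $k>1$ are the right tool, and the Taylor expansion at $p=0$ recovers the limiting value $\sigma$ with $Q(0)=0$.

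Your route is genuinely different from the paper's. The paper does not compute the closed form of $E[X_{Q(p)}]$; instead it invokes the first-order optimality condition of Theorem~\ref{theo:optimal_condtion}, reduces it algebraically to $(1-p)^{-k}(1-kp)=1$, checks that $p=0$ solves it, and then differentiates the left-hand side to show monotonicity and hence uniqueness of the critical point. The limiting value $\sigma$ is obtained via L'Hospital applied to the raw expression for $E[X_{Q(p)}]$. Your approach buys a direct global statement: the inequality $E[X_{Q(p)}]\ge\sigma$ on all of $(0,1)$ follows in one stroke from Bernoulli, so there is no separate step needed to argue that the unique stationary point is in fact a minimum rather than a maximum or inflection. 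The paper's approach, by contrast, stays within its general first-order framework and avoids simplifying $E[X_{Q(p)}]$ itself, at the cost of a slightly less self-contained minimality argument. Your handling of $k=1$ as a separate logarithmic case is also cleaner than leaving it implicit.
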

\begin{proof}
 Due to Theorem \ref{theo:optimal_scale}, the scale parameter $\sigma$ does not have any influence on the optimal restart time.
Hence, it can be set to $\sigma = 1$ for the analysis. 
Then the condition as in Theorem \ref{theo:optimal_condtion} is:
\begin{equation}
 \frac{1}{k} \left(\frac{1}{1-k}- p-\frac{(1-p)^{1-k}}{1-k} +(1-p) \left( (1-p)^{-k} - 1\right)\right)-p(1-p)^{-k}=0.
 \label{eq:pareto}
\end{equation}
By transforming the equation, this condition can be simplified to $(1-p)^{-k}(1-kp)=1$. 
This is obviously true for $p=0$. Actually, it can also be shown that $p=0$ is the only value which fulfills the condition.
Differentiating $(1-p)^{-k}(1-kp)$ with respect to $p$ yields $\frac{(k-1)kp}{(1-p)^k (p-1)}$.
For $k\in (0,1)$ this is strictly positive, and for $k \in (1, \infty)$ this is strictly negative for $p\neq 0$.
For this reason, the condition is either strictly monotonically increasing or strictly monotonically decreasing.
Therefore, every other value of $p$ does not fulfill equation \ref{eq:pareto}. 
The expected runtime is given by $E[X_{Q(p)}]=\frac{1-p}{p}Q(p)+\frac{\Q(p)-\Q(0)}{p}$. 
In case of GP the limit exists for $p\rightarrow 0$, it can be obtained by applying L'Hospital's rule and is given by
$E[X_{Q(0)}]\rightarrow Q'(0)=\sigma$.
 \qed
\end{proof}
It is noteworthy that the expected runtime does not depend on the shape parameter $k$ anymore.
When comparing expected runtimes under restart with the expected runtimes without restarts $E[X]=\frac{\sigma}{1-k}$ for $k\in (0,1)$,
it is easy to see that for $k\rightarrow 0$ both runtimes converge against $\sigma$. This is consistent with the fact that
the generalized Pareto distribution becomes the exponential distribution for $k=0$.

\subsection{Weibull}
\label{sec:weibull_optimal}
The Weibull distribution was extensively analyzed by Wolter \cite{wolter2010stochastic}. 
The Weibull distribution is one of the three limiting distributions of the Fisher-Tippett-Gnedenko Theorem in case of 
the minimum value (see for example \cite{kotz2000extreme}). Therefore it is a likely candidate when observing the minimum
of $n$ i.id random variables $X_1, \dots, X_n$. 
Frost et al. \cite{Frost1997CSP} observed that the runtime distributions of several backtracking algorithms can be reasonably well described by Weibull distributions
for solvable binary CSP instances at the $50\%$ satisfiability point.
Hoos and Stützle \cite{hoos1998evaluating} examined the runtime of a SAT-solver (GSAT). They found that for non-optimal
parameter settings the runtime can be described as a Weibull distribution.
Barrero et al. \cite{barrero2015statistical} studied generation based models without selective pressure, they
argue that the generations-to-success can be modeled by a Weibull distribution. 
The Weibull distribution is defined as follows:
\begin{definition}[\cite{norman1994continuous}]
 Let $X$ be a real-valued random variable. The random variable $X$ has a Weibull distribution if and only if
 \begin{equation}
  F_X(x)=1-e^{-(\frac{x}{a})^k}
 \end{equation}
for some fixed $k>0$. Then the quantile function $Q_X$ is given by:
\begin{equation}
 Q_X(p)=a(-\log{(1-p)})^{1/k}.
\end{equation}
\end{definition}
Wolter \cite{wolter2010stochastic} showed that
restarts are always useful for $k<1$. In case $k>1$ restarts are always harmful and in case $k=1$ the Weibull distribution becomes the 
exponential distribution, therefore restarts are neither useful nor harmful.
The same results can be obtained by the technique from Theorem \ref{theo:sufficient}. 
The derivative $Q'$ and an \anti $\Q$ of $Q$ are given as follows:
\begin{align}
 & Q'(p)= a \frac{(-\log{(1 - p)})^{\frac{1}{k}-1}}{k (1 - p)},\\
 & \Q(p) = a \gamma\left(1+\frac{1}{k}, -\log{(1-p)}\right),
\end{align}
where $\gamma(z,x)=\int_0^x t^{z-1}e^{-t}\D t$ is the incomplete gamma function. The optimal
restart time is analyzed in the next theorem.

\begin{theorem}
 Let $X$ be a Weibull distributed random variable with $k<1$. Then the optimal restart time is zero and the optimal expected runtime 
 under restart is $E[X_{Q(0)}]=0$.
\end{theorem}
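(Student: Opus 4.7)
My plan is to work directly with the expected-runtime formula
$E[X_{Q(p)}] = \frac{1-p}{p}Q(p) + \frac{\Q(p)-\Q(0)}{p}$
derived in the proof of Theorem \ref{theo:optimal_condtion}, and to show that its infimum over $p \in (0,1)$ equals $0$ by computing $\lim_{p \to 0^+} E[X_{Q(p)}]$. Because $X \ge 0$ forces $E[X_{Q(p)}] \ge 0$, proving that this limit equals $0$ will settle both assertions of the theorem simultaneously.

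First I would record the boundary values $Q(0) = 0$ and $\Q(0) = a\,\gamma(1 + 1/k,\,0) = 0$, using that the incomplete gamma function $\gamma(z,x) = \int_0^x t^{z-1}e^{-t}\,\D t$ vanishes at $x = 0$. This simplifies the expected runtime to $\frac{1-p}{p}Q(p) + \frac{\Q(p)}{p}$, and I would treat the two summands separately.

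For the first summand, the asymptotic $-\log(1-p) \sim p$ as $p \to 0^+$ gives $\frac{1-p}{p}Q(p) \sim a\,p^{1/k-1}$, which tends to $0$ because $k < 1$ forces the exponent $1/k - 1$ to be strictly positive. For the second summand, a single application of L'Hospital's rule yields $\lim_{p\to 0^+} \frac{\Q(p)}{p} = \lim_{p\to 0^+} Q(p) = 0$. Adding the two limits gives $\lim_{p \to 0^+} E[X_{Q(p)}] = 0$, as required.

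The main difficulty is interpretive rather than technical: the infimum $0$ is not attained at any strictly positive cut-off, so the phrase ``optimal restart time is zero'' must be read as the limiting statement that $E[X_{Q(p)}]$ can be made arbitrarily close to $0$ by choosing $p$ close to $0$. As a consistency check, substituting $p = 0$ into the optimality equation of Theorem \ref{theo:optimal_condtion} (noting that $Q'(0) = 0$, since $1/k - 1 > 0$) yields the trivial identity $0 = 0$, confirming that $p = 0$ is the limiting stationary point of the expected runtime.
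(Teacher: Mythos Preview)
Your proof is correct and follows essentially the same route as the paper: both compute $\lim_{p\to 0^+}E[X_{Q(p)}]=0$ and then invoke non-negativity of $X$ to conclude that this limit is the global infimum. The only difference is organizational: the paper first checks that $p=0$ satisfies the stationary-point condition of Theorem~\ref{theo:optimal_condtion} and then evaluates the limit via a single L'Hospital step (obtaining $Q'(0)=0$), whereas you go straight to the limit by splitting $E[X_{Q(p)}]$ into two summands and handling them with the asymptotic $-\log(1-p)\sim p$ and L'Hospital respectively, relegating the optimality condition to a consistency check. Your ordering is arguably cleaner, since the stationary-point equation is not actually needed once the limit is shown to be $0$.
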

\begin{proof}
	 The quantile function, its derivative and the \anti can be inserted into Theorem \ref{theo:optimal_condtion} which yields:
	 \begin{equation}
	 \gamma\left(1+\frac{1}{k}, -\log{(1-p)}\right)+(-\log{(1-p)})^{\frac{1}{k}}(1-p-\frac{p}{k}(-\log{(1-p)})^{-1})=0.
	 \label{eq:weibull_eq}
	 \end{equation}
	 The scale parameter $a$ can be ignored due to Theorem \ref{theo:optimal_scale}. Therefore, define $a=1$ for this proof.
	 Since $\frac{1}{k}-1$ is strictly positive for $k<1$ the term $\frac{p}{k}(-\log{(1-p)})^{\frac{1}{k}-1}$ approaches zero as 
	 $p$ approaches zero, hence, equation \ref{eq:weibull_eq} approaches zero. Then, the expected runtime $E[X_{Q(p)}]$ approaches
	 $Q'(p)=\frac{a(-\log{(1-p)})^{\frac{1}{k}-1}}{k(1-p)}$. 
	 \begin{equation}
	 E[X_{Q(0)}]=
	 \left\{
	 \begin{array}{ll}
	 0  & \mbox{if } k < 1 \\
	 a & \mbox{if } k = 1\\
	 \infty & \mbox{if } k>1.
	 \end{array}
	 \right.
	 \end{equation}
	 Zero is, by definition, the lowest possible runtime, restarting at $Q(0)$ is therefore the optimal restart strategy for Weibull distributions 
	 with $k<1$.
	 \qed
\end{proof}
It is remarkable that the optimal expected runtime in the case of $k<1$ is not dependent on any parameter of the distribution. 

%
%
 
 \subsection{Location Parameter}
 \label{sec:location}
Up to now, several  multiplicative variants of random variables $X$ were discussed.
Nonetheless, many commonly used distributions require an additive location parameter $b\in\mathbb{R}$ which shifts the support 
of the cdf. 
In this section, the results are augmented with this extension and it is shown that the influence of the location parameter $b$ is limited.


In the following, let $X$ be a random variable without location parameter and let $Y=X+b$ be a random variable with some location parameter $b$.
The expected value is known to be a linear function, therefore $E[Y]= E[X]+b$. Similar results follow 
easily for the quantile function. Since $F_Y(x)=F_X(x-b)$ holds $Q_Y(p)=Q_X(p)+b$ directly follows. Then an \anti of $Q_Y$ is given by $\Q_X(p)+pb$ where $\Q_X$
is an \anti of $Q_X$. Define $c\in\R$ such that $b = cE[X]$.
With these identities the usefulness of restarts can be reestimated:
\begin{align}
 &E[X]+b > \frac{1-p}{p}(Q_X(p)+b)+b+\frac{1}{p}(\Q_X(p)-\Q_X(0))\\
 \Leftrightarrow & p+(p-1)c > (1-p)L'_X(p)+L_X(p)
\end{align}
is the new condition for the usefulness of restarts.
This inequality assumes $E[X]>0$, for $E[X]<0$ the \textquoteleft greater than\textquoteright{} sign becomes
a \textquoteleft less than\textquoteright{} sign. However, this only makes sense if the location parameter shifts the support of the distribution to strictly positive values.
It can be shown that the condition
for the optimal restart time only changes by the location parameter itself. This can be shown by similar transformations; the proof is 
therefore omitted.
\begin{theorem}
 Let $X$ be a random variable and let $Y=X+b$ be a random variable with location parameter $b \in \R$.
 Then the condition for the optimal restart time for $Y$ is:
 \begin{equation}
   (p-1)Q_X(p)+p(1-p)Q'_X(p)-\Q_X(p)+\Q_X(0)=b.
   \label{eq:location_op}
 \end{equation}
\end{theorem}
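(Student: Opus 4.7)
The plan is to piggy-back directly on Theorem \ref{theo:optimal_condtion} applied to $Y$ and then translate the condition back in terms of the quantities associated with $X$. All the pieces needed for this translation have already been collected immediately before the theorem statement: from $F_Y(x)=F_X(x-b)$ one obtains $Q_Y(p)=Q_X(p)+b$, differentiating gives $Q'_Y(p)=Q'_X(p)$, and an antiderivative of $Q_Y$ is $\Q_Y(p)=\Q_X(p)+pb$. So the proof should amount to substitution and collecting terms.

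First I would apply Theorem \ref{theo:optimal_condtion} to $Y$, which asserts that any optimal restart time $Q_Y(p)$ satisfies
\begin{equation*}
(p-1)Q_Y(p)+p(1-p)Q'_Y(p)-\Q_Y(p)+\Q_Y(0)=0.
\end{equation*}
Next I would substitute the three identities above. The term $(p-1)Q_Y(p)$ contributes $(p-1)Q_X(p)+(p-1)b$; the derivative term is unchanged and contributes $p(1-p)Q'_X(p)$; and $-\Q_Y(p)+\Q_Y(0)$ equals $-\Q_X(p)-pb+\Q_X(0)+0\cdot b=-\Q_X(p)+\Q_X(0)-pb$. Adding the $b$-contributions gives $(p-1)b-pb=-b$, so moving this to the right-hand side yields exactly
\begin{equation*}
(p-1)Q_X(p)+p(1-p)Q'_X(p)-\Q_X(p)+\Q_X(0)=b,
\end{equation*}
which is the claimed Equation \ref{eq:location_op}.

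The only thing to double-check is the antiderivative identity $\Q_Y(p)=\Q_X(p)+pb$ and, in particular, that $\Q_Y(0)=\Q_X(0)$; this follows because an antiderivative of the constant $b$ is $pb$, which vanishes at $p=0$. There is no real obstacle here: the argument is a routine linearity calculation, and the value of the theorem lies in the clean interpretation, namely that a location shift affects the optimality condition only by an additive constant $b$, confirming the informal claim in the preceding paragraph that the influence of a location parameter on the optimal restart time is just linear.
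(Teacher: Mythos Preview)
Your argument is correct and is exactly the ``similar transformations'' the paper alludes to (the paper in fact omits the proof): apply Theorem~\ref{theo:optimal_condtion} to $Y$ and substitute $Q_Y(p)=Q_X(p)+b$, $Q'_Y(p)=Q'_X(p)$, $\Q_Y(p)=\Q_X(p)+pb$, after which the $b$-contributions combine to $-b$ and yield Equation~\eqref{eq:location_op}. The only cosmetic remark is that the difference $-\Q_Y(p)+\Q_Y(0)$ is independent of the chosen antiderivative, so the check that $\Q_Y(0)=\Q_X(0)$ is not even needed.
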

If $Q'_X$ dominates $(1-p)$ in the neighborhood of $p=1$, then the left side of equation \ref{eq:location_op} approaches
infinity. This implies that restarts are useful for an arbitrarily large $b$ since $E[X_{Q(1)}]=E[X]$. 
This is the case for all distributions considered in this article.
\begin{corollary}
 Let $X$ be log-normal, GP, or Weibull distributed, with parameters such that restarts are useful. Let $b\in \R_+$, then there is a $p \in (0,1)$ with
 \begin{equation}
  E[(X+b)_{Q(p)}] < E[X+b].
 \end{equation}
\end{corollary}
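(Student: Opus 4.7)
The plan is to make rigorous the informal sketch given in the paragraph preceding the corollary. For $Y = X+b$, the identities $Q_Y(p) = Q_X(p)+b$ and $\Q_Y(p) = \Q_X(p)+pb$ from the preceding discussion yield the clean decomposition
\begin{equation*}
E[Y_{Q_Y(p)}] \;=\; E[X_{Q_X(p)}] + \frac{b}{p},
\end{equation*}
so that both sides converge to $E[X]+b = E[Y]$ as $p \to 1$. It therefore suffices to show that $E[Y_{Q_Y(p)}]$ is strictly increasing on some left neighborhood of $p=1$; combined with the limit, continuity then delivers a $p \in (0,1)$ with $E[Y_{Q_Y(p)}] < E[Y]$.

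Second, I would differentiate the identity, mirroring the argument in Section \ref{sec:log-normal-exist}, to obtain
\begin{equation*}
\frac{d}{dp}E[Y_{Q_Y(p)}] \;=\; \frac{p-1}{p^2}Q_X(p) + \frac{1-p}{p}Q'_X(p) - \frac{\Q_X(p)-\Q_X(0)}{p^2} - \frac{b}{p^2}.
\end{equation*}
For each of the three families with finite mean, the last two terms contribute the bounded limit $-E[X]-b$ as $p \to 1$, and $(p-1)Q_X(p)/p^2$ vanishes (this is a short case check using the explicit quantile functions). Hence the divergence of the derivative to $+\infty$ reduces to the single claim that $(1-p)Q'_X(p) \to \infty$ as $p \to 1$.

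This claim is a one-line asymptotic verification per case: for generalized Pareto with $0 < k < 1$, $(1-p)Q'_X(p) = \sigma(1-p)^{-k} \to \infty$; for Weibull with $k < 1$, $(1-p)Q'_X(p) = (a/k)(-\log(1-p))^{1/k-1} \to \infty$; and for the log-normal distribution the divergence is already established in Section \ref{sec:log-normal-exist}. The remaining sub-case $k \geq 1$ of the generalized Pareto is automatic, since there $E[Y] = \infty$ while $E[Y_{Q_Y(p)}]$ is finite for every $p \in (0,1)$. The main subtlety is confirming that the $O(1)$ correction terms are genuinely dominated by $(1-p)Q'_X(p)/p$ near $p=1$, but this is immediate once the displayed asymptotics are in hand.
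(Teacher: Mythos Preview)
Your proposal is correct and follows essentially the same route as the paper. The paper's sketch argues that the left-hand side of the shifted optimality equation $(p-1)Q_X(p)+p(1-p)Q'_X(p)-\Q_X(p)+\Q_X(0)=b$ tends to $+\infty$ as $p\to 1$ whenever $Q'_X$ dominates $(1-p)$, and combines this with $E[Y_{Q_Y(1)}]=E[Y]$; since that left-hand side is exactly $p^2\,\frac{d}{dp}E[Y_{Q_Y(p)}]+b$, this is precisely your derivative argument. Your decomposition $E[Y_{Q_Y(p)}]=E[X_{Q_X(p)}]+b/p$ and the explicit per-family verification of $(1-p)Q'_X(p)\to\infty$ (plus the separate treatment of GP with $k\ge 1$) simply make explicit what the paper leaves as ``this is the case for all distributions considered in this article.''
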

Note, that this is not a general property which is true for all distributions which admit useful restarts. Counterexamples are distributions with a finite support.
With this relationship, it is reasonable to analyze distributions without location parameters and scale parameters.
This simplifies the analysis whether a restart strategy should be employed and if so, which strategy should be chosen.

\section{Discussion}
This work discussed the relationship between the quantile function and restarts using the fixed cut-off strategy. Theorem \ref{theo:sufficient} uses the quantile function
and its \anti to provide a condition for the usefulness of restarts, while Theorem \ref{theo:optimal_condtion} established a condition
for the optimal restart times.
It was proven in Theorem \ref{theo:scale} and \ref{theo:optimal_scale} that scale parameters can be ignored in the context of restarts.
In section \ref{sec:location} it was shown that the influence of location parameters on the usefulness of restarts is limited. For a large
group of distributions 
the usefulness of restarts is not affected at all by the presence of a 
location parameter. Secondly, the optimal restart times are just linearly influenced by a location parameter. 
Therefore, it often suffices to analyze the remaining parameters.\\
\indent Several commonly used distributions were observed for their usefulness under restart. 
In the following, the log-normal distribution
(compare section \ref{sec:log-normal-exist}) and the generalized Pareto distribution (compare section \ref{sec:pareto-exist}) were discussed.
It was shown that in the case of the log-normal distribution restarts are always useful.
In case of the generalized Pareto distribution, restarts are useful iff the shape parameter $k$ is greater than zero.
The optimal restart times and optimal expected runtimes under restart of the log-normal, the generalized Pareto and the Weibull distribution 
were discussed. The expected runtime without restart for the log-normal distribution increases super-exponentially in $\sigma$. For the log-normal distribution, it
was numerically observed that for increasing parameter $\sigma$ the optimal restart time is decreasing. And while the expected value without
restarts increases super-exponentially, the expected value with restart at the optimal time starts to decrease at about $\sigma \approx 1.1$.
It is also interesting to see that for low values of $\sigma$ the speedup with restarts is marginal. This is especially important 
if the parameters of the distribution are not completely known. Figure \ref{fig:log_expected_sigma07} represents an example of this behavior.
This shows that choosing a suboptimal restart time
can easily result in expected runtimes under restart which are worse than not employing a restart strategy.
Therefore, in practice it can be better to not employ a restart strategy
if the parameters are estimates and $\sigma$ is estimated to be low. 
\begin{figure}[tbp]
 \centering
 \floatbox[{\capbeside\thisfloatsetup{capbesideposition={left,top},capbesidewidth=5.8cm}}]{figure}[\FBwidth]
 {\caption{This figure depicts the expected runtime of a log-normally distributed random variable $X$ with $\mu=0$ and $\sigma=0.7$ as a dashed line.
 The blue line is the expected runtime with restart after $Q_X(p)$ steps.} \label{fig:log_expected_sigma07}}
 {\includegraphics[width=\linewidth]{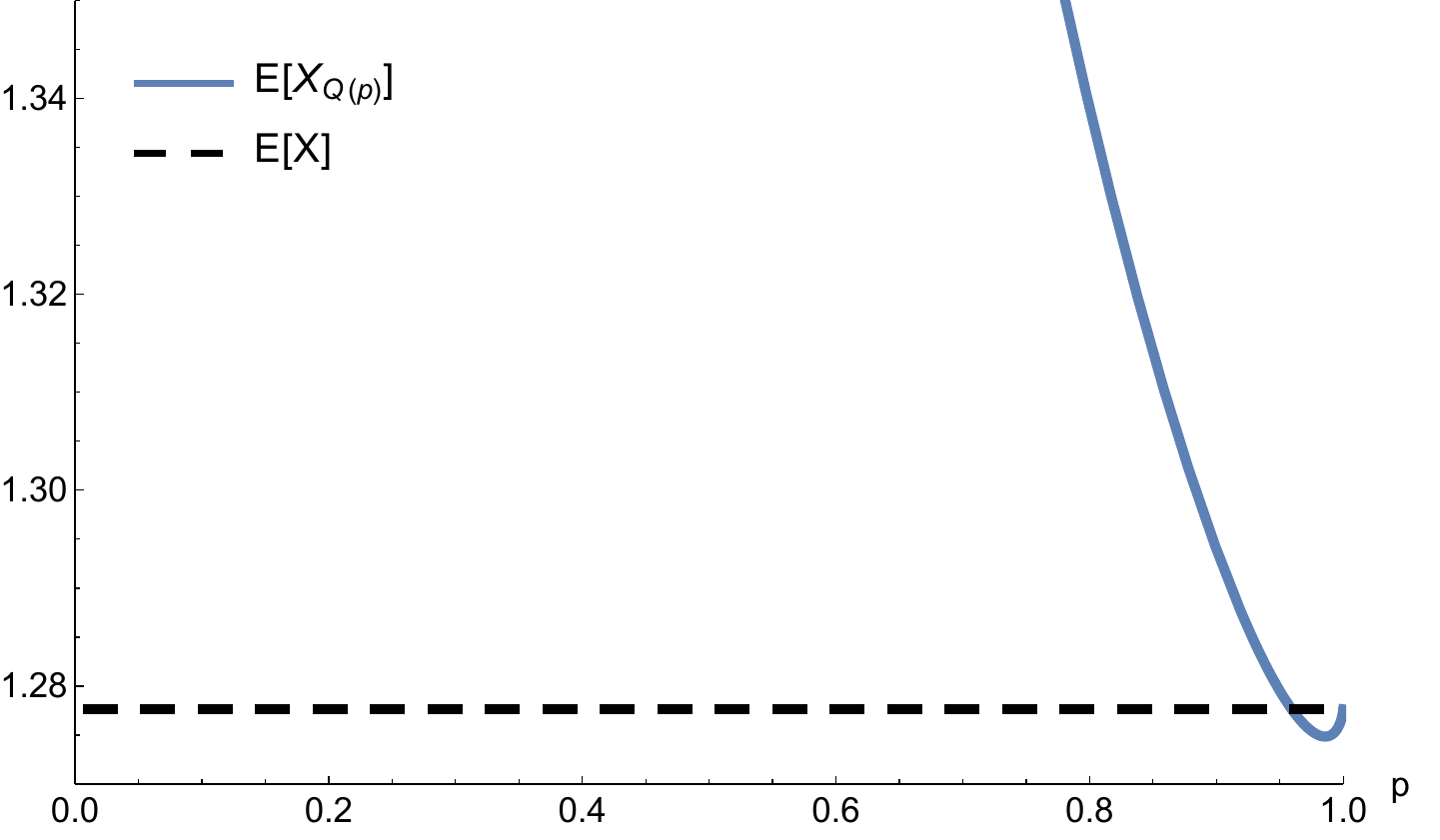}}
\end{figure}
\\
\indent
For the Weibull distribution, it was shown that the optimal restart time and 
the expected runtime is zero. A similar behavior was observed for the GP distribution.
The optimal restart time is also zero, its optimal expected runtime under restart, however, approaches $\sigma$.
It is noteworthy that in these two cases the expected runtime is no longer dependent on the shape parameter and
in the case of the Weibull distribution also not dependent from the scale parameter. 
For algorithms, an intuitive description of the restart quantile zero in, e.g., runtime distributions with location $b$ and scale $a$ emerges from the optimum restart condition~$(p-1)Q_X(p)+p(1-p)Q'_X(p)-\Q_X(p)+\Q_X(0)=b/a$. If $a\gg b$, the optimal restart time approaches $Q(0)=b$, i.e., the algorithm's behavior before $b$ dominates all subsequent steps.
%
%
It is also remarkable 
that some distributions which show suboptimal behavior without restarts yield low runtimes when an optimal restart strategy is applied.

%


\bibliographystyle{splncs03}


\end{document}